\documentclass[letterpaper, 10 pt, conference]{ieeeconf}  

\IEEEoverridecommandlockouts                              

\usepackage{graphics} 
\usepackage{epsfig} 
\usepackage{graphicx}
\usepackage{xcolor}

\usepackage{times} 
\usepackage{amsmath} 
\usepackage{amssymb}  
 
\usepackage{amsthm}
\usepackage{mathtools} 

\theoremstyle{plain}
\newtheorem{theorem}{Theorem}
\newtheorem{proposition}{Proposition}

\theoremstyle{definition}
\newtheorem{definition}{Definition}
\newtheorem{assumption}{Assumption}

\usepackage{hyperref} 
\usepackage[capitalize]{cleveref}

\usepackage{subcaption}
\usepackage{graphicx}

\crefname{theorem}{theorem}{theorems}
\Crefname{theorem}{Theorem}{Theorems}

\crefname{proposition}{proposition}{propositions}
\Crefname{proposition}{Proposition}{Propositions}

\crefname{definition}{definition}{definitions}
\Crefname{definition}{Definition}{Definitions}

\crefname{assumption}{assumption}{assumptions}
\Crefname{assumption}{Assumption}{Assumptions}

\hypersetup{
  pdftitle={Stability and Sensitivity Analysis for Objective Misspecifications Among Model Predictive Game Controllers},
  pdfauthor={Ada Yildirim}
  pdfkeywords={model predictive control, game theory, MPC, dynamic games},
  pdfsubject={Control Systems}
}

\title{\LARGE \bf Stability and Sensitivity Analysis for Objective Misspecifications Among Model Predictive Game Controllers}

\author{Ada Yıldırım and Bryce L. Ferguson
\thanks{A. Yıldırım and B. L. Ferguson are with the Thayer School of Engineering, Dartmouth College, Hanover, NH 03755, USA,
        {\tt\small \{ada.yildirim.th, bryce.l.ferguson\}@dartmouth.edu}}%
}

\begin{document}

\bstctlcite{BSTcontrol}

\maketitle
\thispagestyle{empty}
\pagestyle{empty}

\begin{abstract}

Model-based multi-agent control requires agents to possess a model of the behavior of others to make strategic decisions. Solution concepts from game theory are often used to model the emergent collective behavior of self-interested agents and have found active use in multi-agent control design. Model predictive games are a class of controllers in which an agent iteratively solves a finite-horizon game to predict the behavior of a multi-agent system and synthesize their own control action. When multiple agents implement these types of controllers, there may exist misspecifications in the respective game models embedded in their controllers, stemming from inaccurate estimates or conjectures of other agents' objectives. This paper analyzes the resulting prediction misalignments and their effects on the system's behavior. We provide criteria for the stability of multi-agent dynamic systems with heterogeneous model predictive game controllers, and quantify the sensitivity of the equilibria to individual agents’ game parameters.

\end{abstract}

\section{INTRODUCTION}
Multi-agent control has been increasingly employed for complex real-life dynamical systems, such as multi-vehicle autonomous driving \cite{zhang_efficient_2024}, multi-drone racing \cite{papuc_strategizing_2026}, and smart grids with distributed energy generation and storage with multiple users \cite{atzeni_demand-side_2013}. Model-based control requires a representative model of these complex systems. For multi-agent control, other agents' behavior needs to be modeled for accurate prediction and planning. Interactions of self-interested and self-determining agents are often approximated as noncooperative dynamic games \cite{basar_dynamic_1999}. The role of solution concepts in games, e.g., Nash equilibria, is intended to provide predictions of reasonable or plausible behavior among these agents. This notion of prediction naturally aligns with the objective of multi-agent control design. However, the lack of accuracy of the game-theoretic solutions as collective behavior predictors raises the question of the reliability of such approaches. 

\begin{figure}
    \centering
    \includegraphics[width=0.75\linewidth]{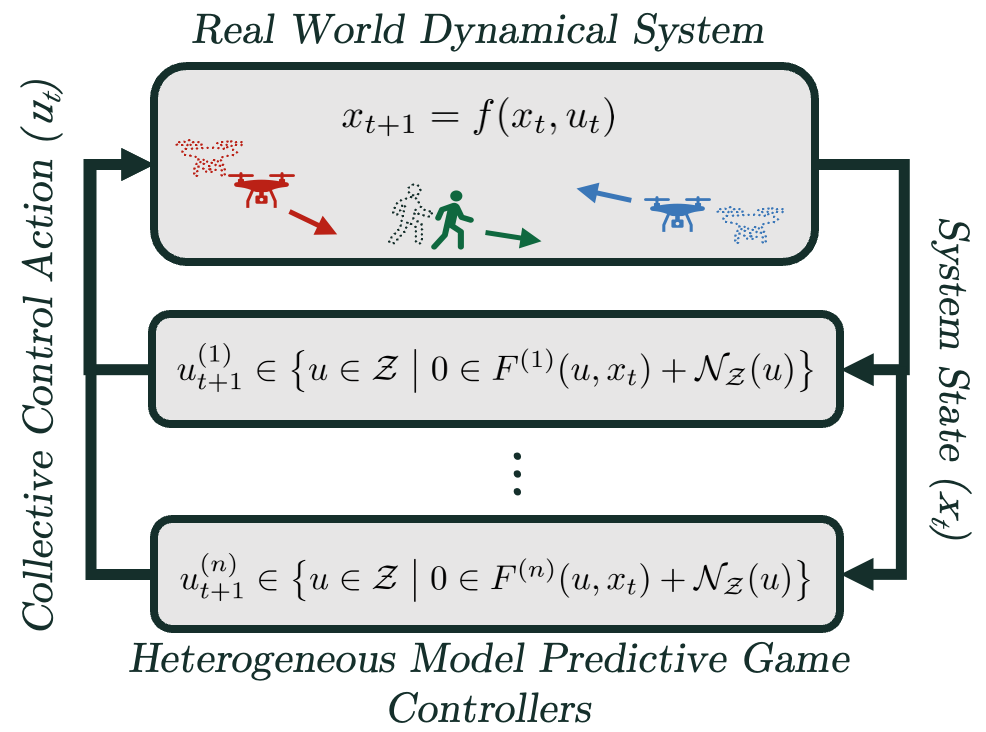}
    \caption{Block diagram of a multi-agent dynamical system with heterogeneous model predictive game controllers.}
    \label{fig:mpg}
\end{figure}

In the single-agent setting, model predictive control (MPC) is a common tool for choosing a control action at the current time by solving a finite-horizon open-loop optimal control problem using a dynamic model.
This solution forecasts system behavior over a finite horizon and closes the loop via iteratively repeating this control action synthesis \cite{rawlings_model_2020}.
In the multi-agent setting, similar philosophies of control design are emerging. The idea of iterative game-theoretic planning for multiple decision-making agents has been studied in \cite{fridovich-keil_efficient_2020}. 
Formally, receding horizon games replace the open-loop finite horizon optimal control solution of MPC with an open-loop, finite horizon Nash equilibrium solution.
Recent works have studied the dynamics of receding horizon games and established their stability with LQ game models \cite{hall_stability_2025, benenati_linear-quadratic_2025}.

When an agent utilizes a receding horizon game to select their control action, the resulting controller is called a model predictive game (MPG) controller~\cite{papuc_strategizing_2026}.
Specifically, at each time step, an agent using an MPG controller solves for an open-loop finite-horizon Nash equilibrium within some embedded game model; from this solution, they select their individual action and deploy it in the system.
When equilibria are unique, and agents use the same embedded game in their MPG controllers, the closed-loop dynamics follow the receding horizon game trajectory~\cite{hall_stability_2025, benenati_linear-quadratic_2025}.
However, at the design stage, competitive agents do not have full information on other agents’ objectives, bringing the possibility of misspecification into their conjectured game model. The resulting loss in performance has been formalized as ``game-to-real gap" \cite{ferguson_game--real_2026}.  This paper focuses on the effects of objective misspecifications present in multi-agent control design, namely, the resulting dynamic behavior of multi-agent systems.
We consider a dynamical system in feedback with heterogeneous MPG controllers, differing through their misspecified game models, as illustrated in \Cref{fig:mpg}.
Our analysis focuses on the stability and equilibria of these systems and provides insight into the space of heterogeneous multi-agent control.

Other works have investigated the connection between games and control and the resulting behavior of these interactions. In multi-agent reinforcement learning~\cite{zhang_multi-agent_2021}, agents learn equilibrium strategies with data-driven interactions; however, this approach typically assumes agents can interact over long durations and may not sufficiently inform ex ante control design. Open-loop and closed-loop Nash equilibria in differential games~\cite{basar_dynamic_1999} provide a complementary perspective, where control policies are found at Nash equilibrium; however, most approaches assume accurate knowledge of the game model by all agents. The emerging control architecture of model predictive games~\cite{fridovich-keil_efficient_2020,hall_stability_2025,benenati_linear-quadratic_2025} allows agents to adapt to realized behavior, but has assumed the equilibrium is solved and deployed homogeneously across agents.
When controllers for agents are designed separately, the accuracy of the game model is dependent on the designer's knowledge or conjecture of the other agents.
Inverse learning has been explored in linear-quadratic games to infer agents’ different estimates of each other’s objectives~\cite{khan_what_2025}. Even through this estimation process, the presence of heterogeneous conjectures by agents persists due to insufficient data or inaccurate approximations.

In this work, we introduce systems of MPG controllers with misspecifications caused by incorrectly conjectured player objectives.
Our main contributions are twofold: 1) we provide stability conditions for multi-agent systems with heterogeneous MPG controllers, and 2) we study the sensitivity of the resulting equilibrium to changes in conjectured objectives. The analysis provides new insights into understanding the impact of asymmetric conjectures in non-cooperative multi-agent control design.

\section{MODEL SETUP}

\subsection{Continuous Action Monotone Games}\label{subsec:monotone_games}

A mathematical game formulation consists of agents, their actions, and cost functions: $G = (N, \{\mathcal{U}_i\}_{i \in N}, \{J_i\}_{i \in N})$, where $N = \{1,2,\ldots,n\}$ is the set of players,  \(\mathcal{U}_i \subseteq \mathbb{R}^{m_i}\) is the action set of player $i$, and $J_i: \mathcal{U}_1 \times \mathcal{U}_2 \times \cdots \times \mathcal{U}_n \to \mathbb{R}$ is the cost function of player $i$. The collective behavior of the group is captured by the joint action profile consisting of the individual actions from each agent is represented by \( u=(u_1,\dots,u_n) \in \mathbb{R}^m\), where $m = \sum_{i \in N} m_i$.
We denote the action of all players except $i$ as $u_{-i}$.

In game-theoretic contexts, the emergent behavior of the agents is often modeled by a Nash equilibrium.
Within the context of multi-agent planning, this solution concept can serve as a prediction of the collective behavior.
Throughout this work, we will consider only local constraints on the agents' actions and the consequent Nash equilibrium (NE) game solution concept.
\begin{definition}\label{def:nash}(Nash equilibrium): Consider an $n$-player game with an action space $\mathcal{Z} = \mathcal{U}_1 \times \mathcal{U}_2 \times \cdots \times \mathcal{U}_n$  and cost functions
$J_i : \mathcal{Z} \to \mathbb{R}$. The joint action profile $u^\star=(u_1^\star,\dots,u_n^\star)$ is a Nash equilibrium if the following holds for every player $i$:
\begin{equation}\label{eq:Nash_def}
J_i(u_i^*, u_{-i}^*) \leq J_i(u_i, u_{-i}^*) 
\quad \forall u_i \in \mathcal{U}_i, \; \forall i \in N.
\end{equation}
\end{definition}
In general, pure Nash equilibria may not exist for a game, or there may exist multiple within a single game \cite{fudenberg_game_1991}, making their role as predictors unreliable. As such, many works have studied the specific class of strongly monotone games. This specific class of games consist of non-cooperative agents who choose their strategies from a convex set of actions and the pseudo-gradient of each agent's cost, defined as \(F(u)= ( \nabla_{u_i} J_i(u_i, u_{-i}) \big)_{i \in N} \), is strongly monotone \cite{facchinei_finite-dimensional_2004}.
\begin{definition}\label{def:monotonicity}(Strong monotonicity): An operator
\( F(x) : \mathcal{X} \subseteq \mathbb{R}^n \to \mathbb{R}^n \) is strongly monotone in \( \mathcal{X} \) if there exists \( \rho > 0 \) such that,
\[
\bigl( F(x_1) - F(x_2) \bigr)^{\top} (x_1 - x_2)
\;\ge\; \rho \, \lVert x_1 - x_2 \rVert^2,
\]
for all \( x_1, x_2 \in \mathcal{X} \), where  \( \rho \) is the strong monotonicity constant.
\end{definition}
Nash equilibria of monotone games where $\mathcal{Z}$ is convex can be equivalently defined in the form of a variational inequality (VI) problem of the pseudo-gradient \(F\) over \(\mathcal{Z}\).
\begin{definition}[] \label{def:vi}
(Variational inequality): Given a subset \(\mathcal{Z} \) of  \(\mathbb{R}^{m}\) and a mapping $F: \mathcal{Z} \rightarrow \mathbb{R}^{m}$, a vector $u \in  \mathcal{Z}$ solves the variational inequality problem, denoted \(VI(\mathcal{Z},F)\), if
\begin{equation}\label{eq:VI_def}
    (y-u)^TF(u) \geq 0, \forall y \in \mathcal{Z}.
\end{equation} 
The set of solutions to this problem is denoted $\mathcal{S}(\mathcal{Z},F)$.
\end{definition}
If the pseudo-gradient $F$ is strongly monotone and  $J_i$ continuously differentiable over $\mathcal{Z}$, then the solution set $\mathcal{S}(\mathcal{Z},F)$ is equivalent to the set of Nash equilibria of the game as given below.
\begin{proposition}[Facchinei et al 2004~\cite{facchinei_finite-dimensional_2004}] \label{prop:nash_VI}
Let each $\mathcal{U}_i$ be a closed convex subset of \(\mathbb{R}^{m_i}\), $\mathcal{Z}\equiv\prod_{i \in N}\mathcal{U}_i$ and $F\equiv(\nabla_{u_i} J_i(u_{-i}))_{i \in N}$.  Suppose that for each fixed tuple $u_{-i}$, the function $J_i(u_i,u_{-i})$ is convex and continuously differentiable in $u_i$. Then a tuple $u \in \mathcal{Z}$ is a Nash equilibrium if and only if $u \in \mathcal{S}(\mathcal{Z},F)$.
\end{proposition}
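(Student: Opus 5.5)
The plan is to prove the two directions separately, using only the product structure of $\mathcal{Z}$, convexity of each $J_i(\cdot,u_{-i})$, and the first-order optimality condition for minimizing a convex differentiable function over a closed convex set. The central fact is the standard one: if $\phi:\mathcal{U}_i\to\mathbb{R}$ is convex and continuously differentiable and $\mathcal{U}_i$ is convex, then $u_i^\star$ minimizes $\phi$ over $\mathcal{U}_i$ if and only if $(y_i-u_i^\star)^\top\nabla\phi(u_i^\star)\ge 0$ for all $y_i\in\mathcal{U}_i$. I will verify this first. For necessity, for any $y_i$ and $t\in(0,1]$, the point $u_i^\star+t(y_i-u_i^\star)$ lies in $\mathcal{U}_i$, so $\phi(u_i^\star+t(y_i-u_i^\star))-\phi(u_i^\star)\ge 0$; dividing by $t$ and letting $t\downarrow 0$ gives the directional derivative inequality. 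For sufficiency, use the gradient inequality of convex functions, $\phi(y_i)\ge\phi(u_i^\star)+(y_i-u_i^\star)^\top\nabla\phi(u_i^\star)\ge\phi(u_i^\star)$.

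For ($\Rightarrow$), let $u^\star$ be a Nash equilibrium. By \Cref{def:nash}, for each $i$, $u_i^\star$ minimizes $\phi_i(u_i):=J_i(u_i,u_{-i}^\star)$ over $\mathcal{U}_i$. The lemma then gives $(y_i-u_i^\star)^\top\nabla_{u_i}J_i(u^\star)\ge0$ for all $y_i\in\mathcal{U}_i$. Now take any $y=(y_1,\dots,y_n)\in\mathcal{Z}$. Since $\mathcal{Z}=\prod_i\mathcal{U}_i$, each component satisfies $y_i\in\mathcal{U}_i$. Summing over $i$ gives $(y-u^\star)^\top F(u^\star)=\sum_i (y_i-u_i^\star)^\top\nabla_{u_i}J_i(u^\star)\ge0$, so $u^\star\in\mathcal{S}(\mathcal{Z},F)$.

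For ($\Leftarrow$), suppose $u\in\mathcal{S}(\mathcal{Z},F)$. Fix $i$ and an arbitrary $y_i\in\mathcal{U}_i$, and choose the test point $y=(y_i,u_{-i})$. It lies in $\mathcal{Z}$ precisely because of the Cartesian product structure. Every block of $y-u$ except the $i$-th vanishes, so the VI reduces to $(y_i-u_i)^\top\nabla_{u_i}J_i(u)\ge0$. Sufficiency in the lemma, applied to $\phi_i=J_i(\cdot,u_{-i})$, yields $J_i(u_i,u_{-i})\le J_i(y_i,u_{-i})$. Since $i$ and $y_i$ were arbitrary, $u$ is a Nash equilibrium.

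I do not expect a serious obstacle. The step needing the most care is the reverse direction, where decoupling the joint VI into per-player conditions depends essentially on $\mathcal{Z}$ being a product; with coupled constraints this direction would fail. The other delicate point is the limit argument in necessity, which requires only differentiability of $J_i$ in $u_i$ along feasible directions and is guaranteed by the continuous differentiability assumption. Closedness of $\mathcal{U}_i$ and strong monotonicity are not needed for the equivalence itself; they matter only for existence and uniqueness.
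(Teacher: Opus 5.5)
Your proof is correct. The paper gives no proof of its own; it only cites Facchinei and Pang, where the result is obtained exactly as you argue: apply the first-order minimum principle to each player's convex problem $\min_{u_i\in\mathcal{U}_i} J_i(u_i,u_{-i})$, then use the Cartesian product structure of $\mathcal{Z}$ to pass between the per-player conditions and the joint VI.
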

Under the additional assumption of strong monotonicity, the solution to \eqref{eq:VI_def}, and thus the Nash equilibrium of the game, is unique.
\begin{proposition}[Bauschke et al 2017~\cite{bauschke_convex_2017}]\label{prop:singleton}
If $F$ is strongly monotone and $\mathcal{Z}$ is closed and convex, then the solution mapping $ \mathcal{S}(\mathcal{Z},F)$ maps to a singleton. 
\end{proposition}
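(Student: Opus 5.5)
The argument has two parts: uniqueness, which follows directly from strong monotonicity, and existence, which needs a fixed-point argument because $\mathcal{Z}$ is only closed and convex, not bounded.

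\textbf{Uniqueness.} Suppose $u,v\in\mathcal{S}(\mathcal{Z},F)$. Take $y=v$ in the inequality \eqref{eq:VI_def} for $u$, and $y=u$ in the inequality for $v$:
\[
(v-u)^\top F(u)\ge 0,\qquad (u-v)^\top F(v)\ge 0 .
\]
Adding the two gives $(F(u)-F(v))^\top(u-v)\le 0$. Strong monotonicity (\Cref{def:monotonicity}) gives $(F(u)-F(v))^\top(u-v)\ge \rho\|u-v\|^2$. Hence $\rho\|u-v\|^2\le 0$, and since $\rho>0$ we get $u=v$. So $\mathcal{S}(\mathcal{Z},F)$ has at most one element.

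\textbf{Existence.} I will use the projection characterization. Let $P_{\mathcal{Z}}$ denote the Euclidean projection onto $\mathcal{Z}$, which is well defined and nonexpansive because $\mathcal{Z}$ is closed, convex and nonempty. By the obtuse-angle property of projections, for any $\gamma>0$,
\[
u\in\mathcal{S}(\mathcal{Z},F)\iff u=P_{\mathcal{Z}}\bigl(u-\gamma F(u)\bigr).
\]
Define the map $T(u)=P_{\mathcal{Z}}(u-\gamma F(u))$. I will show $T$ is a contraction on $\mathcal{Z}$ and apply Banach's fixed point theorem. Since $\mathcal{Z}$ is a closed subset of $\mathbb{R}^m$, it is complete.

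By nonexpansiveness of the projection,
\[
\|T(u)-T(v)\|^2\le \|u-v\|^2-2\gamma (F(u)-F(v))^\top(u-v)+\gamma^2\|F(u)-F(v)\|^2 .
\]
If $F$ is $L$-Lipschitz, strong monotonicity bounds this by $(1-2\gamma\rho+\gamma^2L^2)\|u-v\|^2$. Choosing $\gamma\in(0,2\rho/L^2)$ makes the factor strictly less than $1$, so $T$ is a contraction and has a unique fixed point, which is the unique VI solution.

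\textbf{Main obstacle.} The proposition does not assume Lipschitz continuity; I expect only continuity of $F$, which is standard in this setting (e.g.\ $J_i$ continuously differentiable). To handle this, I would truncate the domain.

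First, strong monotonicity gives a coercivity bound. Fix $u_0\in\mathcal{Z}$. Then
\[
(F(u)-F(u_0))^\top(u-u_0)\ge\rho\|u-u_0\|^2,
\]
so
\[
F(u)^\top(u-u_0)\ge \rho\|u-u_0\|^2-\|F(u_0)\|\,\|u-u_0\| .
\]
This is positive whenever $\|u-u_0\|>R:=\|F(u_0)\|/\rho$.

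Next, take any $R'>R$ and set $\mathcal{Z}_{R'}=\mathcal{Z}\cap\{u:\|u-u_0\|\le R'\}$, which is compact and convex. Hartman--Stampacchia, via Brouwer applied to the continuous map $P_{\mathcal{Z}_{R'}}(u-F(u))$, gives a solution $\bar u$ of $VI(\mathcal{Z}_{R'},F)$.

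Testing that inequality with $y=u_0$ gives $F(\bar u)^\top(\bar u-u_0)\le 0$. By the coercivity bound this forces $\|\bar u-u_0\|\le R<R'$, so $\bar u$ lies in the interior of the ball. Now take any $y\in\mathcal{Z}$. By convexity, $\bar u+t(y-\bar u)\in\mathcal{Z}_{R'}$ for small $t>0$, and the truncated inequality then gives $t(y-\bar u)^\top F(\bar u)\ge 0$, hence $(y-\bar u)^\top F(\bar u)\ge0$. So $\bar u\in\mathcal{S}(\mathcal{Z},F)$.

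Combining existence with uniqueness shows that $\mathcal{S}(\mathcal{Z},F)$ is a singleton. By \Cref{prop:nash_VI}, the game therefore has a unique Nash equilibrium.
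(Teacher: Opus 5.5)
Your proof is correct. The paper does not prove this statement; it imports it from Bauschke--Combettes, where it follows from monotone-operator theory. There the VI is the inclusion $0\in F(u)+N_{\mathcal{Z}}(u)$. The operator $F+N_{\mathcal{Z}}$ is maximally monotone (for $F$ continuous, monotone and everywhere defined, and $\mathcal{Z}$ nonempty, closed and convex) and $\rho$-strongly monotone. Such an operator is surjective with single-valued inverse, so $(F+N_{\mathcal{Z}})^{-1}(0)$ is a singleton.

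You argue elementarily instead. Uniqueness comes from adding the two VI inequalities. Existence comes from Hartman--Stampacchia on $\mathcal{Z}\cap\{u:\|u-u_0\|\le R'\}$, with the coercivity bound forcing the truncated solution into the open ball so the artificial constraint is inactive. Every step checks out, including the small-$t$ convexity argument that passes back to all of $\mathcal{Z}$.

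The operator route's extra payoff is that it directly gives the single-valued, cocoercive map $\phi^{(j)}=(F_u^{(j)}+N_{\mathcal{Z}})^{-1}$ used in the appendix for Theorem~1. Your uniqueness computation yields this in one more line: if $a-F(u)\in N_{\mathcal{Z}}(u)$ and $b-F(v)\in N_{\mathcal{Z}}(v)$, monotonicity of $N_{\mathcal{Z}}$ gives $(a-b)^\top(u-v)\ge\rho\|u-v\|^2$.

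Adding continuity of $F$ (and nonemptiness of $\mathcal{Z}$) is not just a convenience; without some such hypothesis the statement is false. On $\mathcal{Z}=\mathbb{R}$, the map with $F(u)=u+1$ for $u\ge 0$ and $F(u)=u-1$ for $u<0$ is $1$-strongly monotone yet has no zero, so $\mathcal{S}(\mathcal{Z},F)=\emptyset$. In Bauschke--Combettes the corresponding hypothesis is maximal monotonicity.

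In the paper's LQ setting, $F$ is affine, hence Lipschitz, and $\mathcal{Z}$ is compact by Assumption~1(ii). So your contraction argument alone already covers the paper's applications. The truncation step matters only for unbounded $\mathcal{Z}$ with merely continuous $F$.
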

Strongly monotone game models model certain classes of games, including LQ games~\cite{hosseinirad_linear_2026}. In addition, they can be used to approximate more complex multi-agent interactions~\cite{fridovich-keil_efficient_2020}, particularly in online or real-time implementations.
A controller that utilizes a game model at the design stage or at the deployment stage, requires the control designer to choose the agent objective functions that characterize the game.
In practice, particularly in competitive or non-cooperative settings, the controller of each agent is designed in isolation.
This separation requires a control designer to conjecture or estimate the objectives of the agents they do not control.
Differences between a designer's conjectured or estimated game model and the true objectives of other agents can cause a gap between the intended behavior of the controller and its performance in the real-world.
The following section formalizes how these types of objective misspecifications affect realized behavior.

\subsection{Games with Misspecifications}
In game-theoretic planning, solution concepts like Nash equilibrium are conditioned on the objectives of each player.
In competitive or distributed control settings, agents' beliefs or conjectures of one another's objectives may differ. 
To study the consequences of these misspecifications, we consider that player $j \in N$ synthesizes a control action from a conjectured game model.
Formally, agent $j$ conjectures that the objective of the agent $i$ is characterized by the cost function \( J_i^{(j)} \), where throughout the notation \((\cdot)^{(j)}\) denotes some quantity that is conjectured or deduced by agent \(j\).
With these conjectures, the agent $j$ possesses the game model $G^{(j)} = (N, \{\mathcal{U}_i\}_{i \in N}, \{J_i^{(j)}\}_{i \in N})$, which is used to predict the collective behavior of the group by solving for a Nash equilibrium $u^{(j)} \in {\rm NE}(G^{(j)})$.
The agent then synthesizes their control action from their NE solution\footnote{In general, one could consider other game-theoretic solution concepts, e.g., Stackelberg equilibria or Bayesian-Nash equilibrium. The underlying problem remains the same: the solution concept is conditioned on the conjectured objectives of other agents. This work focuses on Nash equilibria due to their tractability in relevant settings and prominence in control design.}, i.e., they use the action $u_j^{(j)}$.
Note that \( J_j^{(j)} \) reflects agent $j$'s conjecture of their own cost function, which we presume is accurate.

In the case of strongly monotone games described in \cref{subsec:monotone_games}, if each agent utilizes the same game model, i.e., $G^{(i)} = G^{(j)}$ for all $i,j \in N$, then each prediction will similarly align, i.e., $u^{(i)} = u^{(j)}$ for all $i,j \in N$, and the resulting behavior will be the solution concept of the homogeneous game model.
In this work, we are interested in the case where agents' conjectures are inaccurate, resulting in heterogeneous game models, or $G^{(i)} \neq G^{(j)}$.
In this setting, agents' predictions will be misaligned from one another, i.e., $u^{(i)} \neq u^{(j)}$.
When each agent selects their action from their local prediction, the \emph{realized joint action} will be $u^\circ:=\left(u_1^{(1)},\ldots ,\ u_n^{(n)} \right)$.
Note that despite each agent solving for a Nash equilibrium, $u^\circ$ need not be an equilibrium of any individual player's conjectured game. The greater the discrepancy between each player's conjectured model, the larger the possible gap between players' predictions and the realized collective behavior.

Recent work studied this form of misspecification and introduced the Game2Real gap, ($J_i^{(i)}(u^\circ) - J_i^{(i)}(u^{(i)})$), or the gap in predicted and realized performance caused by game model misspecification \cite{ferguson_game--real_2026}.
Another line of research on inverse learning in games seeks to reduce this gap by estimating the objectives of other agents through online interactions~\cite{ward_active_2024,soltanian2025pace,khan_what_2025}, but either by approximation, insufficient data in estimation, or the need to design offline rather than adapt online, some level of misspecification between the conjectured game model and realized behavior will persist.
This work seeks to understand how game model misspecifications affect the dynamics and equilibria of multi-agent systems.
Specifically, we will focus on the class of model predictive game controllers, which embed game models within their feedback rules, and investigate the consequences of objective misspecification on the closed-loop dynamics.

\subsection{Model Predictive Games with Misspecifications}\label{subsec:mpg}
In dynamic multi-agent systems, model predictive game (MPG) controllers have emerged as a promising archetype that adapts to the behavior of other agents while retaining strategic planning capabilities~\cite{hall_stability_2025,benenati_explicit_2025}.
Like their namesake, model predictive controllers, model predictive game controllers generate a finite-horizon prediction to synthesize the current control action; MPC and MPG differ in that, rather than solving an open-loop optimal control problem, MPG solves for an open-loop Nash equilibrium of some embedded game model.
The MPG controller has emerged and proven to be effective in a variety of applications, such as drone racing \cite{papuc_strategizing_2026} and competitive self-driving cars \cite{wang_game-theoretic_2021}.

We consider a dynamic multi-agent environment in which a system state $x \in \mathbb{R}^{n_x}$ evolves according to a linear time-invariant (LTI) dynamic,
\begin{equation} \label{eq2}
        x_{t+1} = A x_t + \sum_{i \in N} B_i u_{i,t}, 
\end{equation}
where $u_{i,t}$ is the control action of agent $i \in N$ at time $t$.
It is assumed that all agents have full information on the system dynamics. 
Each agent $j \in N$ has a stage-wise cost $g_j^{(j)}(x_t,u_t)$, as well as a conjecture of the stage-wise cost $g_i^{(j)}$ of each other agent $i \in N$.
Adopting the LQ game model, we assume that each real and conjectured stage cost is of the following form:
\begin{equation} g_{i}^{(j)}(x_t, u_{i,t}, u_{-i,t}) = x_t^\top Q_i^{(j)} x_t + x_t^\top q_i^{(j)} + u_t^\top R_i^{(j)} u_t, \label{eq3} \end{equation}
for all $i,j \in N$ where $(Q_i^{(j)})^\top=Q_i^{(j)}$ and $(R_i^{(j)})^\top=R_i^{(j)}$.
Additionally, each agent's instantaneous control action is constrained to satisfy $u_{i,t} \in \mathcal{U}_i$. 

\begin{figure}
    \centering
    \includegraphics[width=0.99\linewidth]{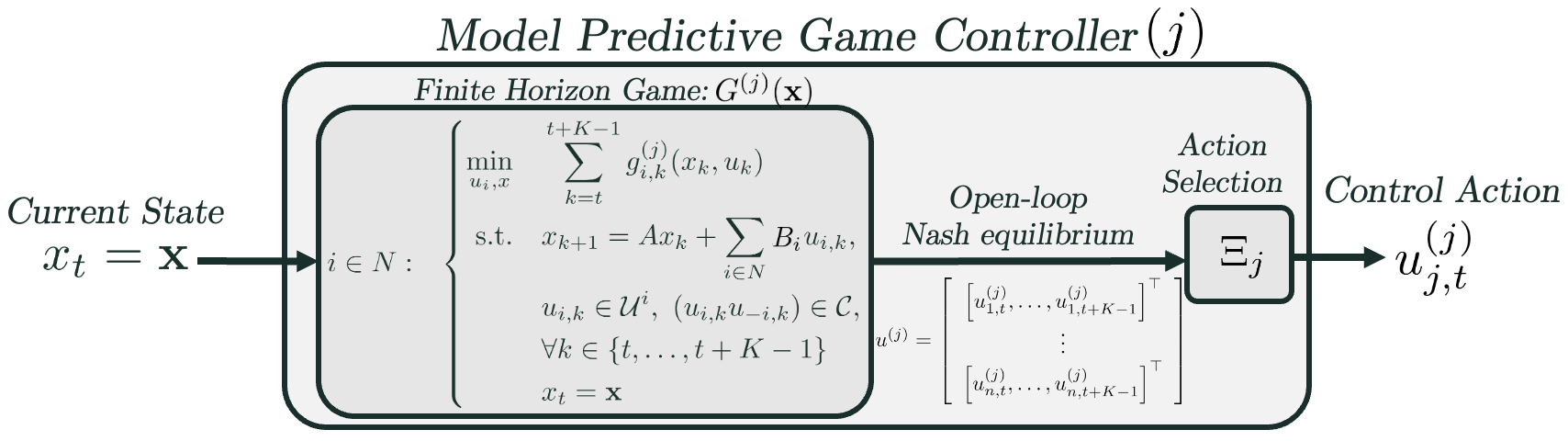}
    \caption{Block Diagram of MPG Controller utilized by player $j$. At time step $t$, and system state $x_t$, player $j$ solves for the NE $u^{(j)}$ of the finite horizon game $G^{(j)}(x_t)$ defined in \eqref{eq5}. The controller then selects the first time step of their own control signal within the Nash solution, $u_{j,t}^{(j)}$, and deploys it within the system.}
    \label{fig:mpg_controller}
    \vspace{-10pt}
\end{figure}

We now consider a feedback controller for each agent $i \in N$ of the form $u_{i,t} = \kappa_i(x_t)$, which provides the next control action for the agent.
Specifically, we consider that each agent uses an MPG controller designed using their conjectures of the other players' objectives.
An MPG controller is depicted in \cref{fig:mpg_controller}.
The closed-loop system of each agent deploying an MPG controller, with heterogeneous conjectured games, can be formalized as follows:

\noindent\textit{1) Each agent solves for open-loop, finite-horizon Nash equilibrium:} At time $t$ and state $x_t = \mathbf{x}$, agent $j$ formulates the following finite-horizon game:
\begin{align}
    &\begin{array}{rl}
        i \in N: & 
        \left\{
        \begin{aligned}
            \min_{u_i, x} \quad & \sum_{k=0}^{K-1} g_{i}^{(j)}(x_k, u_k) \\
            \text{s.t.} \quad & x_{k+1} = A x_k + \sum_{i \in N} B_i u_{i,k}, \\
            & u_{i,k} \in \mathcal{U}_i,\quad k \in \{0,\ldots,K-1\} \\
            & x_0 = \mathbf{x}
        \end{aligned}
        \right.
    \end{array}\nonumber\\[-15pt]
    &~\label{eq5}
\end{align}
where \(K>1\) is the prediction horizon\footnote{The finite-horizon model predictive control approach approximates a solution to the infinite-horizon version, which can be computationally intractable~\cite{rawlings_model_2020}. Here, the finite-horizon game offers tractable game-theoretic solutions.}.
Observe that the open-loop finite horizon game in \eqref{eq5} fits the definition of a continuous action game defined in \cref{subsec:monotone_games} where the control signal $u_{i,0:K-1}$ can be cast as a vector action constrained to $(\mathcal{U}_i)^K$, inducing the joint action space $\mathcal{Z} = (\prod_{i \in N} \mathcal{U}_i)^K$, and the player cost functions $J_i^{(j)}(\cdot;\mathbf{x})$ are parameterized by the initial condition $\mathbf{x}$.
Therefore, from the point of view of the agent \(j\), in each timestep, they solve for a Nash equilibrium of the parameterized game $G^{(j)}\left(\mathbf{x}\right)$, i.e., finding $u^{(j)} \in {\rm NE}\left(G^{(j)}(\mathbf{x})\right)$ of the form \(u^{(j)}= \mathrm{col}(u^{(j)}_{i,k})_{i \in N,k \in [K]} \).
From \Cref{prop:nash_VI}, this joint-control signal is the solution to a variational inequality.

\noindent\textit{2) Select instantaneous control action from NE:}
The mapping from the initial state \(\mathbf{x}\) to the solution of the variational inequality \eqref{eq:VI_def} or, equivalently, to the NE of \eqref{eq5} from the point of view of agent \(j\) is $\mathcal{S}(\mathcal{Z},F^{(j)}(\cdot;\mathbf{x}))$ which we define concisely as
\begin{equation}
    \mathcal{S}^{(j)}(\mathbf{x}) := \left\{ u \;\middle|\; (y-u)^TF^{(j)}(u;\mathbf{x}) \geq 0, \forall y \in \mathcal{Z} \right\},
    \label{eq9}
\end{equation}
where $F^{(j)}$ is the pseudo-gradient of agent $j$'s conjectured game in \eqref{eq5}.
Later, we apply assumptions so that $\mathcal{S}^{(j)}$ is always well-defined and single-valued, in line with \Cref{prop:singleton}. 
At each time step \(t\), from the finite horizon NE prediction $u^{(j)}$, agent $j$ selects their instantaneous control action as the first element of their individual control signal, i.e., $u_{j,0}^{(j)} = \Xi_j \mathcal{S}^{(j)}(\mathbf{x})$, where $\Xi_j$ is a selection matrix.

\noindent\textit{3) Closed-loop dynamics of heterogeneous MPG controllers:}
When each agent $j \in N$ utilizes an MPG controller and implements the action \(u^{(j)}_{j,0}\), the realized joint action at time $t$ and state $x_t$ is
$u_{t}^\circ = \kappa(x_t) :=  \left[(\Xi_1 \mathcal{S}^{(1)}x_t)^\top, \ldots, (\Xi_n \mathcal{S}^{(n)}x_t)^\top \right]^\top$.
With this state feedback, the closed-loop system becomes
\begin{equation}\label{eq:closed_loop_system}
    x_{t+1} =  A x_t + B\kappa(x_t) = A x_t + \sum_{j \in N} B_j\Xi_j \mathcal{S}^{(j)}(x_t),
\end{equation}
where $B=[B_1,\dots,B_n]$ and $\kappa$ is the feedback law.

Existing work on MPG controllers considers a homogeneous game model shared by all of the agents, simplifying our closed-loop system \eqref{eq:closed_loop_system} to contain only one NE solution.
In practice, these controllers are deployed on competing or distributed agents, where differences in beliefs, conjectures, or sensed information can result in misaligned models.
To the best of the authors' knowledge, this is the first work to provide a rigorous analysis of multi-agent systems with heterogeneous MPG controllers.
As such, in this work, we seek to address fundamental aspects of the system \eqref{eq:closed_loop_system}, specifically, stability and the sensitivity of the resulting equilibrium to conjectured game parameters.

\begin{figure*}
\centering
\begin{subfigure}{0.245\textwidth}
    \includegraphics[width=\textwidth]{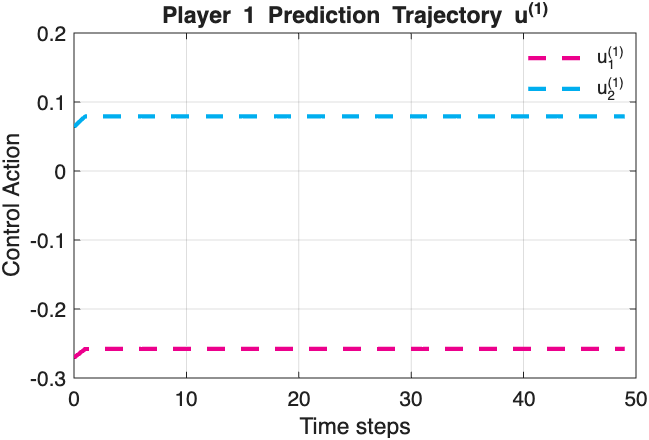}
\end{subfigure}
\hfill
\begin{subfigure}{0.245\textwidth}
    \includegraphics[width=\textwidth]{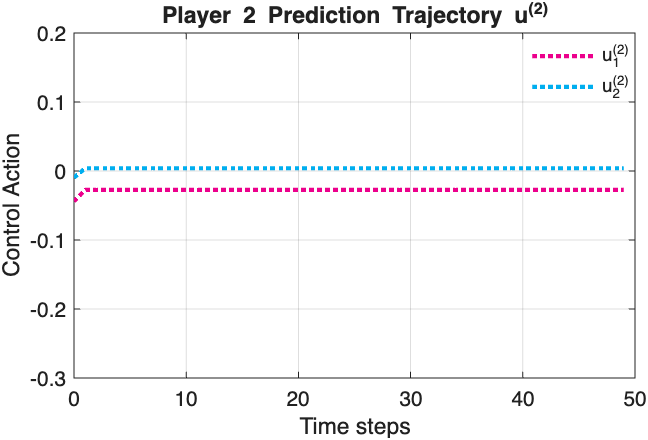}
\end{subfigure}
\hfill
\begin{subfigure}{0.245\textwidth}
    \includegraphics[width=\textwidth]{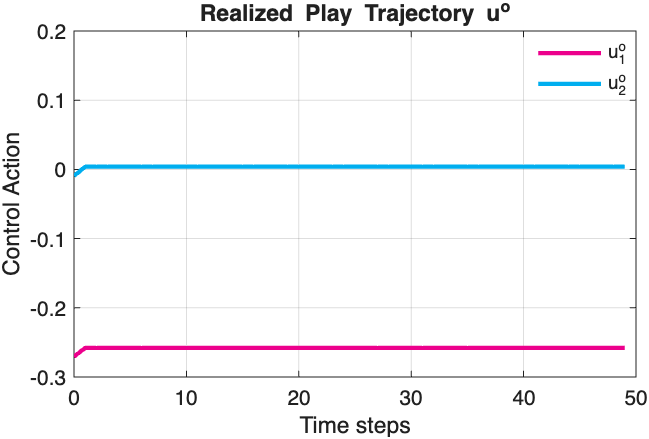}
\end{subfigure}
\hfill
\begin{subfigure}{0.245\textwidth}
    \includegraphics[width=\textwidth]{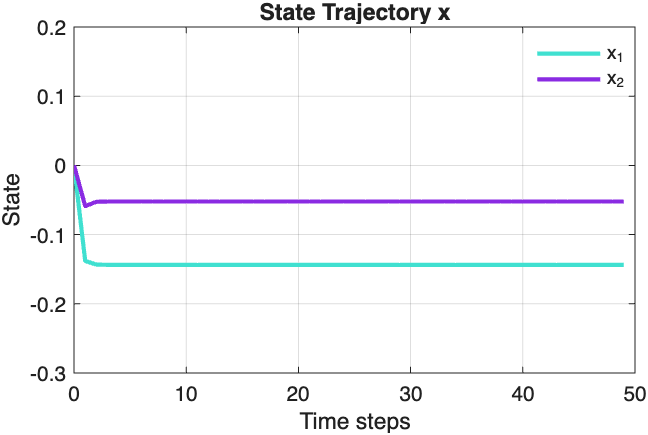}
\end{subfigure}
\hfill
\vspace{-10pt}
\caption{A stable multi-agent system with objective misspecifications in MPG controllers.}
\label{fig:stable_ex_1}
\end{figure*}

\begin{figure*}
\centering
\begin{subfigure}{0.245\textwidth}
    \includegraphics[width=\textwidth]{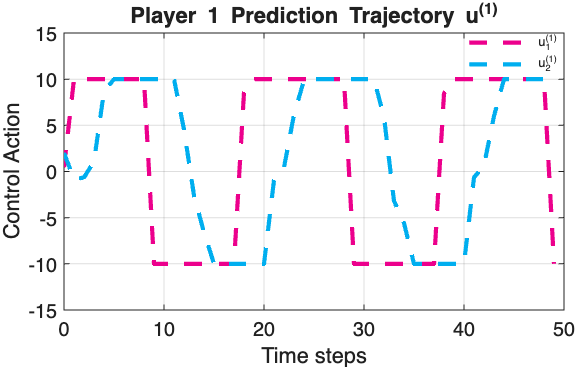}
\end{subfigure}
\hfill
\begin{subfigure}{0.245\textwidth}
    \includegraphics[width=\textwidth]{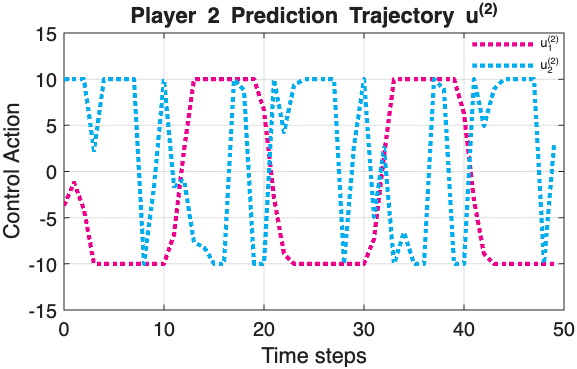}
\end{subfigure}
\hfill
\begin{subfigure}{0.245\textwidth}
    \includegraphics[width=\textwidth]{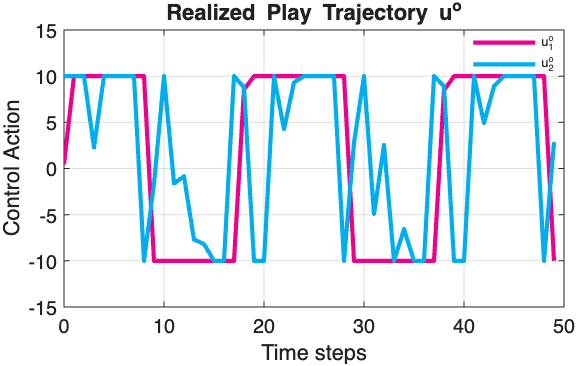}
\end{subfigure}
\hfill
\begin{subfigure}{0.245\textwidth}
    \includegraphics[width=\textwidth]{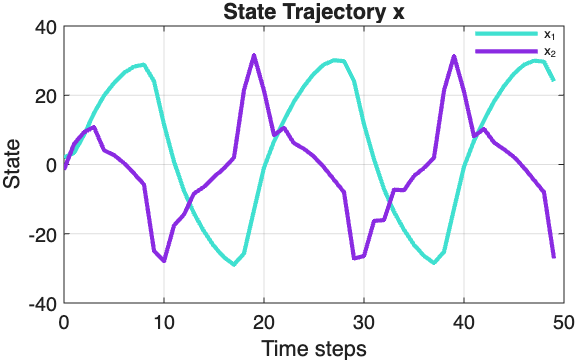}
\end{subfigure}
\hfill
\vspace{-10pt}
\caption{An unstable multi-agent system with objective misspecifications in MPG controllers.}
\label{fig:num_ex_2}
\vspace{-10pt}
\end{figure*}

\section{STABILITY CONDITIONS}\label{sec:stability}

In order to derive the conditions for closed-loop stability of the multi-agent dynamical system with heterogeneous MPG controllers \eqref{eq:closed_loop_system}, the following assumptions are made.

\begin{assumption} \label{as:mpc}
The following hold for the model predictive game with misspecifications described in \eqref{eq5}:
\begin{itemize}
    \item[(i)] \textit{The open-loop dynamics \eqref{eq2} are stable, i.e., \( \rho(A) < 1 \).}
    \item[(ii)] \textit{The local sets \( \mathcal{U}_i, \forall i \in N, \)
    are closed and convex. The set \( \mathcal{Z} \) is compact and non-empty.
    \item[(iii)] The function \( g_{i}^{(j)}(\mathbf{x},\cdot, u_{-i}) \) is strongly convex and continuously-differentiable, for any fixed \( u_{-i} \), \( \forall i \in N \).}
    \item[(iv)] \textit{The pseudo-gradient \( \mathbf{F}^{(j)}(\cdot, \mathbf{x}) \) in \eqref{eq9} is \( \rho_j \)-strongly monotone for any fixed \( \mathbf{x} \), $\forall$ j $\in$ N, for some \(\rho_j > 0\).}

\end{itemize}
\end{assumption} 

These ensure the existence and uniqueness of an equilibrium point for each $S^{(j)}$ as given by \Cref{prop:singleton}. The following \Cref{thm:stability} provides a stability condition for a multi-agent dynamical system consisting of MPG controllers with objective misspecifications.

\begin{theorem} \label{thm:stability}
If \Cref{as:mpc} is satisfied and there exist a positive-definite matrix $P \succ 0$ and a scalar $\lambda> 0$, such that
\begin{equation}
\begin{bmatrix}
A^\top P A - P & A^\top P \hat{B} \\
\hat{B}^\top P A & \hat{B}^\top P \hat{B}
\end{bmatrix}
+ \lambda W \preceq -\varepsilon I,
\end{equation}
\textit{for some $\varepsilon>0$,} where $
    F^{(j)}(u, \mathbf{x})=F^{(j)}_{u}(u)+F^{(j)}_{x}\mathbf{x},
$ $\hat{B} =\begin{bmatrix} {B}_1\Xi_1 & {B}_2\Xi_2 & \dots & {B}_n\Xi_n \end{bmatrix}$, and the blocks of $W$ are defined as, $\forall j\in N$,
\begin{subequations}\label{eq:W_def}
\begin{align}
W_{1(j+1)} = -\frac{(F^{(j)}_{x})^T}{2}, \ &W_{(j+1)1} = -\frac{F^{(j)}_{x}}{2},   \\
W_{(j+1)(j+1)} &= -\rho_jI,
\end{align}
\end{subequations}
and zero otherwise, where the subscript indices denote the block positions for $W$, and $\rho_j$ is the strong monotonicity constant of \(F^{(j)}_{u}\), then the following conditions hold:
\begin{itemize}
    \item[(i)] \textit{there exists a globally asymptotically stable equilibrium point $\bar{x}\in\mathbb{R}^{n_x}$ of the closed-loop system~\eqref{eq:closed_loop_system};}
    \item[(ii)] \textit{\eqref{eq5} is recursively feasible for all $\mathbf{x}\in\mathbb{R}^{n_x}$;}
    \item[(iii)] \textit{the control inputs satisfy the local constraints for all times, i.e., $u_t\in\prod_{i \in N} \mathcal{U}_i$ for all $t$.}
\end{itemize}
\end{theorem}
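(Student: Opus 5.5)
The plan follows the Lur'e-type (incremental) quadratic-constraint argument used for homogeneous receding horizon games, applied block-wise, one block per agent's VI solution map.

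\textbf{Step 1: feasibility and constraints.} Items (ii) and (iii) are immediate. By \Cref{as:mpc}(ii)--(iv) and \Cref{prop:nash_VI,prop:singleton}, for every $\mathbf{x}\in\mathbb{R}^{n_x}$ the VI defining $\mathcal{S}^{(j)}(\mathbf{x})$ has exactly one solution in $\mathcal{Z}$. Since the state enters only as the initial condition, the game \eqref{eq5} is feasible at every $x_t$, which gives recursive feasibility. Since $\mathcal{S}^{(j)}(\mathbf{x})\in\mathcal{Z}=(\prod_i\mathcal{U}_i)^K$, each applied action $\Xi_j\mathcal{S}^{(j)}(x_t)\in\mathcal{U}_j$. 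This gives (iii).

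\textbf{Step 2: equilibrium and incremental quadratic constraint.} I write the closed loop as $x_{t+1}=Ax_t+\hat B\,\sigma(x_t)$, where $\sigma(\mathbf{x})=\mathrm{col}(\mathcal{S}^{(j)}(\mathbf{x}))_{j\in N}$. Because $\mathcal{Z}$ is compact, $\sigma$ is bounded. Strong monotonicity also makes each $\mathcal{S}^{(j)}$ Lipschitz in $\mathbf{x}$: it is the solution of a strongly monotone VI with an affine parameter, so its Lipschitz constant is $\|F^{(j)}_x\|/\rho_j$. For existence of $\bar x$ I would either invoke Brouwer's theorem on the continuous map $x\mapsto Ax+\hat B\sigma(x)$, which maps a large ball into itself because $\rho(A)<1$ and $\sigma$ is bounded, or, more cleanly, get existence and uniqueness from the contraction shown in Step 3. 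Let $\bar x$ be the equilibrium and $\bar s_j=\mathcal{S}^{(j)}(\bar x)$.

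Now take any $x$ with $s_j=\mathcal{S}^{(j)}(x)$. Write the VI for $s_j$ with test point $y=\bar s_j$, the VI for $\bar s_j$ with test point $y=s_j$, and add them. This gives
\[
(s_j-\bar s_j)^\top\bigl(F^{(j)}_u(s_j)-F^{(j)}_u(\bar s_j)\bigr)+(s_j-\bar s_j)^\top F^{(j)}_x(x-\bar x)\le 0 .
\]
Applying strong monotonicity yields
\[
\rho_j\|\delta s_j\|^2+\delta s_j^\top F^{(j)}_x\,\delta x\le 0 ,
\]
with $\delta x=x-\bar x$ and $\delta s_j=s_j-\bar s_j$. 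In terms of $z=\mathrm{col}(\delta x,\delta s_1,\dots,\delta s_n)$, this is exactly $z^\top W_j z\ge 0$, where $W_j$ collects the $j$-th off-diagonal and diagonal blocks in \eqref{eq:W_def}. Summing over $j$ gives $z^\top W z\ge 0$.

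\textbf{Step 3: Lyapunov decrease.} I take $V(x)=(x-\bar x)^\top P(x-\bar x)$. The equilibrium satisfies $\bar x=A\bar x+\hat B\bar s$, so $\delta x_{t+1}=A\delta x_t+\hat B\delta s_t$. Then $V(x_{t+1})-V(x_t)$ equals $z^\top M z$, where $M$ is the first matrix in the LMI. The LMI gives
\[
z^\top Mz\le -\varepsilon\|z\|^2-\lambda z^\top Wz\le-\varepsilon\|\delta x_t\|^2 .
\]
This yields global exponential, hence asymptotic, stability of $\bar x$, proving (i).

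\textbf{Main obstacle.} The delicate point is the existence of $\bar x$ before the incremental argument can be centered on it. To avoid relying on Brouwer's theorem, I would apply the same computation to two arbitrary states $x,x'$ rather than to $x$ and $\bar x$. The identical steps give
\[
\|f(x)-f(x')\|_P^2\le\|x-x'\|_P^2-\varepsilon\|x-x'\|^2 ,
\]
where $f(x)=Ax+\hat B\sigma(x)$. So $f$ is a contraction in the $P$-norm, with factor $\sqrt{1-\varepsilon/\lambda_{\max}(P)}$. Banach's fixed point theorem then gives a unique equilibrium $\bar x$ and global convergence to it, which establishes (i) directly.

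Two bookkeeping points need care. First, $W$'s off-diagonal blocks must be sized so that $F^{(j)}_x$ pairs the full prediction vector $s_j\in\mathbb{R}^{mK}$ with $\delta x$. Second, $\hat B$ must act on $\mathrm{col}(s_j)$ through the selection matrices $\Xi_j$.
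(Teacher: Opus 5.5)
Your argument is correct. Its stability core is the same as the paper's. The paper also uses $V_{\bar{x}}(x)=(x-\bar{x})^\top P(x-\bar{x})$, the incremental constraint $\rho_j\|\Delta u^{(j)}\|^2+(\Delta u^{(j)})^\top F^{(j)}_x\Delta x\le 0$, and the S-procedure with multiplier $\lambda$. It phrases this as a dissipativity interconnection of the LTI block with the static blocks $\phi^{(j)}=(F^{(j)}_u+\mathcal{N}_{\mathcal{Z}})^{-1}$. The paper imports the constraint as $\rho_j$-cocoercivity of $\phi^{(j)}$ from \cite{hall_stability_2025}. You derive it directly by adding the two variational inequalities, which is more self-contained.

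The real difference is how you obtain the equilibrium. The paper proves existence separately by applying the Schauder--Tychonoff (in finite dimensions, Brouwer) theorem to $h(x)=(I-A)^{-1}B\kappa(x)$, which maps into the compact convex set $\mathcal{H}$. This needs compactness of $\mathcal{Z}$, $\rho(A)<1$, and continuity of the solution maps. It gives existence only, and uniqueness then follows from global asymptotic stability.

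Your contraction route runs the same incremental computation on two arbitrary states. Banach's theorem then gives existence, uniqueness and a global exponential rate $\sqrt{1-\varepsilon/\lambda_{\max}(P)}$ at once. It uses neither compactness of $\mathcal{Z}$, since strong monotonicity already makes each $\mathcal{S}^{(j)}$ single-valued, nor \Cref{as:mpc}(i), since the $(1,1)$ block of the LMI already gives $A^\top PA-P\preceq-\varepsilon I$. It also makes explicit the uniqueness of the equilibrium that \Cref{prop:sensitivity} presupposes. In exchange, the paper's route has the advantage that an equilibrium exists independently of the LMI.

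You also prove (ii) and (iii) explicitly, which the paper's proof leaves implicit. If you keep the Brouwer alternative, take the invariant ball in a norm with $\|A\|<1$. The condition $\rho(A)<1$ alone does not make $A$ contractive in the Euclidean norm.
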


\Cref{thm:stability} is an extension of the previous result derived in \cite{hall_stability_2025} as a stability condition for the receding horizon games, where the next action was collectively solved for all players in a centralized MPC block with full knowledge of the game model. The common real-life setting of noncooperative games where multiple agents individually solve for their conjectured game model is considered in our result, rather than assuming a centralized solver or all agents having full knowledge of the true game model. The inclusion of the term $W$ is useful in capturing the effect of the agents' misspecifications on the stability result. Even in the presence of misspecifications within every agent's game model, stability can be achieved.

\section{SENSITIVITY ANALYSIS}
In \Cref{sec:stability}, the dynamics of multi-agent systems in which agents utilize heterogeneous game models to predict and synthesize control actions were studied; a sufficient condition was provided that guarantees the existence of a globally asymptotically stable equilibrium point when each agent deploys an MPG controller, i.e., the closed-loop system \eqref{eq:closed_loop_system}.
In this section, we seek to understand how this equilibrium depends on the level of misspecification among the agents.
Specifically, if each agent utilizes a parameterized game model, what is the sensitivity of the system equilibrium to changes in these parameters?

To approach this sensitivity analysis, recall that, at time step $t$, each agent $i \in N$ synthesizes their control action via the Nash equilibrium of their conjectured finite horizon game \eqref{eq5} initialized at $x_t$.
Properties of the Nash equilibrium of such games were described in \cref{subsec:monotone_games}; we extend this treatment of strongly monotone games to parameterized game models.
Let $G(\delta) = (N, \{U_i\}_{i \in N}, \{J_i(\cdot;\delta)\}_{i \in N})$ be a game whose objective functions are parameterized by $\delta \in \Delta$.
We assume that $J_i(u;\delta)$ is continuous in $\delta$ for all $i \in N$ and $u \in \mathcal{Z}$.
Clearly, as $\delta$ changes, the objectives of each agent and the Nash equilibrium will change.
Per \Cref{prop:nash_VI}, if $F(\cdot;\delta)$ is strongly monotone and $\mathcal{Z}$ is convex, then the Nash equilibrium of $G(\delta)$ is the solution to $VI(F(\cdot;\delta),\mathcal{Z})$.
Using existing results on parametric variational inequality analysis, we can characterize the sensitivity of a Nash equilibrium to the parameter $\delta$; for our heterogeneous MPG controllers, this will serve to quantify the sensitivity of prediction and ultimately the equilibrium of the closed-loop system \eqref{eq:closed_loop_system}.
To do so, we assume that the predicted Nash equilibria are constrained to a polytope, i.e,
\begin{assumption} \label{as:constraints}
    The constraints on the joint actions of the agents take the form $\mathcal{Z} = \{u \mid Cu \leq d,  ~Hu=h\}$.
\end{assumption}

To characterize the sensitivity of Nash equilibria, let $u^\star:\Delta \rightarrow \mathcal{Z}$ denote the solution mapping of the variational inequality $VI(F(\cdot;\delta),\mathcal{Z})$ over parameters $\delta \in \Delta$.
The KKT system of $VI(F(\cdot;\delta),\mathcal{Z})$ can be written concisely as $Cu \leq d$, $\nu \geq 0$, and
\begin{equation}
K(u, \nu, \mu; \delta) :=
\begin{bmatrix}
F(u; \delta)
+ C^\top\nu
+ H^\top \mu\\
Hu - h \\
\mathrm{diag}(\nu)(Cu - d)
\end{bmatrix}
= 0,
\label{eqKKT}
\end{equation}
where $\nu$ and $\mu$ are the inequality and equality dual variables, respectively.
For ease of notation, let the stacked primal-dual solution be denoted by $p^*(\delta) := (u^*, \nu^*, \mu^*)$.
We recall the classic result on the sensitivity of solutions to parametric variational inequalities.
\begin{proposition}[Tobin 1986~\cite{tobin_sensitivity_1986}]\label{prop:tobin}
     Under \Cref{as:constraints}, if given $\delta \in \Delta$, $F(\cdot;\delta)$ is strongly monotone and differentiable,
     $p^\star = (u^\star,\nu^\star,\mu^\star)$ satisfies \eqref{eqKKT},
     the active constraints at $u^\star$ are linearly independent, and strict complementary slackness holds, i.e., $\nu^\star_k > 0$ when $(Cu-d)_k=0$, then in a neighborhood of $\delta$,
    $u^\star(\delta)$ is a unique solution to $VI(F(\cdot;\delta),\mathcal{Z})$ and differentiable, and
    $
        \nabla_\delta p^\star(\delta) = -\left(\nabla_p K(p^\star;\delta)\right)^{-1}\nabla_\delta K(p^\star;\delta).
    $
\end{proposition}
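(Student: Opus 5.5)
The plan is to apply the implicit function theorem to the KKT map $K$ in \eqref{eqKKT} at $(p^\star,\delta)$. We assume $F$ is continuously differentiable jointly in $(u,\delta)$, which is what "differentiable" should mean here. Then $K$ is $C^1$ in $(p,\delta)$. If $\nabla_p K(p^\star;\delta)$ is nonsingular, we get a $C^1$ map $\delta'\mapsto p(\delta')$ near $\delta$ with $K(p(\delta');\delta')=0$ and $p(\delta)=p^\star$. Differentiating $K(p(\delta');\delta')=0$ gives the stated formula. Two things remain: invertibility of the Jacobian, and the fact that the root $p(\delta')$ of the equation system really solves $VI(F(\cdot;\delta'),\mathcal{Z})$.

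\textbf{Invertibility.} The Jacobian is
\begin{equation*}
\nabla_p K=\begin{bmatrix}\nabla_u F & C^\top & H^\top\\ H & 0 & 0\\ \mathrm{diag}(\nu^\star)C & 0 & \mathrm{diag}(Cu^\star-d)\end{bmatrix}.
\end{equation*}
Take $(a,b,c)$ in its kernel and split the inequality constraints into the active set $\mathcal{A}$ and the inactive set $\mathcal{I}$.
\begin{itemize}
\item For $k\in\mathcal{I}$ we have $\nu^\star_k=0$ and $(Cu^\star-d)_k<0$, so the third block row forces $b_k=0$.
\item For $k\in\mathcal{A}$, strict complementarity gives $\nu^\star_k>0$, so the third block row forces $C_k a=0$.
\item The second block row gives $Ha=0$.
\end{itemize}
Multiplying the first block row by $a^\top$ then yields
\begin{equation*}
a^\top\nabla_uF\,a=-(C_{\mathcal{A}}a)^\top b_{\mathcal{A}}-(Ha)^\top c=0.
\end{equation*}
Strong monotonicity of a differentiable $F$ gives $a^\top\nabla_uF\,a\ge\rho\|a\|^2$, so $a=0$. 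The first row then reduces to $C_{\mathcal{A}}^\top b_{\mathcal{A}}+H^\top c=0$, and linear independence of the active constraint gradients gives $b_{\mathcal{A}}=0$ and $c=0$. Hence the Jacobian is nonsingular.

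\textbf{VI solution.} By continuity of $p(\cdot)$, in a small enough neighborhood of $\delta$ the inactive constraints stay strictly inactive and the active multipliers stay strictly positive. The complementarity rows therefore force $\nu_{\mathcal{I}}(\delta')=0$ and $(Cu(\delta')-d)_{\mathcal{A}}=0$. So $p(\delta')$ satisfies the full KKT conditions: $Cu\le d$, $Hu=h$, $\nu\ge0$, complementarity, and stationarity. For a polyhedral $\mathcal{Z}$ these KKT conditions are sufficient for $u(\delta')\in\mathcal{S}(\mathcal{Z},F(\cdot;\delta'))$.

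\textbf{Uniqueness.} We need $F(\cdot;\delta')$ to remain strongly monotone for $\delta'$ near $\delta$; this holds if $\nabla_uF$ is continuous in $\delta$ and the stated hypothesis holds on the neighborhood. Then \Cref{prop:singleton} says this VI solution is the unique one, so $u^\star(\delta')=u(\delta')$ is well-defined and differentiable, with the derivative given by the implicit function formula.

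The main obstacle is the invertibility step, specifically using strict complementarity to decouple the complementarity rows; the remaining steps are routine once the active set is shown to be locally stable.
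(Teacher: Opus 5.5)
Your proposal is correct and is the standard argument behind Tobin's theorem, which the paper cites rather than proves: the implicit function theorem applied to the KKT map, with nonsingularity of $\nabla_p K$ coming from linear independence of the active constraints, strict complementarity, and positive definiteness of $\nabla_u F$ on the subspace cut out by the active constraints (here supplied by strong monotonicity). Two small touch-ups: first, in your displayed Jacobian the block $\mathrm{diag}(Cu^\star-d)$ belongs in the $\nu$-column rather than the $\mu$-column, and your kernel computation already uses it that way; second, your extra hypothesis for uniqueness is unnecessary when $\mathcal{Z}$ is a bounded polytope, because joint continuity of $\nabla_u F$ together with compactness gives $\langle F(x;\delta')-F(y;\delta'),x-y\rangle\ge(\rho/2)\|x-y\|^2$ on $\mathcal{Z}$ for $\delta'$ near $\delta$, whereas on an unbounded polyhedron one can only expect the local uniqueness that Tobin actually proves (e.g.\ $F(u;\delta)=u-\delta u^3$ on $\mathbb{R}$ has three solutions for every small $\delta>0$).
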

This construction, by parametric variational inequality analysis, allows us to characterize in closed form the sensitivity of the solution to a VI.
In our context, we are particularly interested in the case where $F(\cdot;\delta)$ is the pseudo-gradient of a game and thus \Cref{prop:tobin} provides the sensitivity of the equilibria to objective function parameters.
Our specific focus is the case where the game $G(\delta)$ is a finite horizon game \eqref{eq5} used within a single agent's MPG controller; in this context, we will consider that the game model is parameterized by both the initial condition of the finite horizon game $x_0$ and a parameter $\theta$ which influences the objective functions of the game model, i.e., $J_i(x_0,\theta)$.
As such, we let $\delta =(x_0,\theta)$.

\begin{figure}
    \centering
    \includegraphics[width=\linewidth]{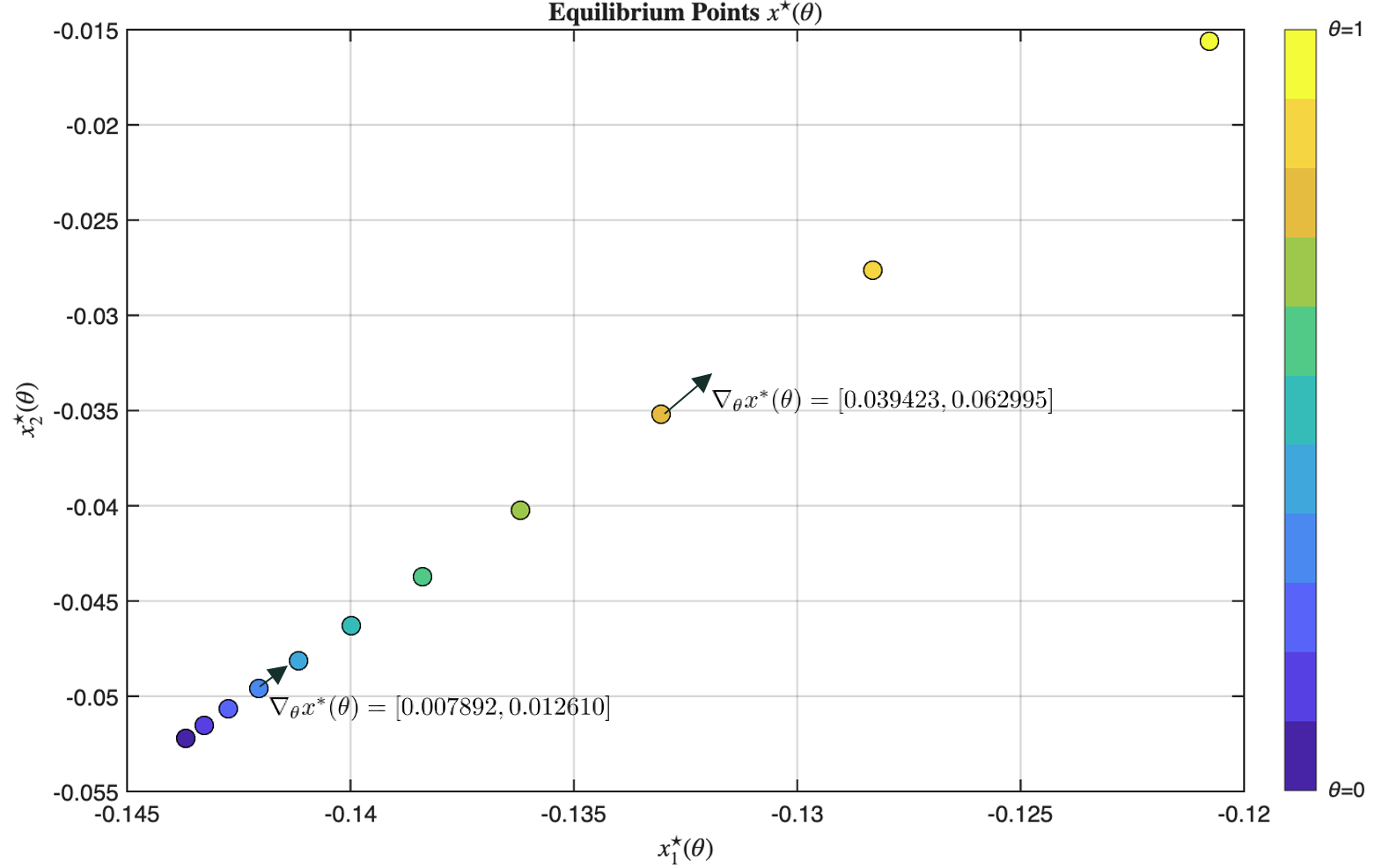}
    \caption{The equilibrium manifold $x^\star(\theta)$ as the misspecification parameter $\theta$ varies in a model predictive game setting for 2 players. $\nabla_{{\theta}} x^*({\theta})$ values are specified for $\theta$ values 0.3 and 0.8.}
    \label{fig:num_ex_3}
\end{figure}

In our pursuit to investigate the consequences of heterogeneous predictions, consider that each player $i \in N$ possesses an individual parameter $\theta^{(i)}$ which determines the game model $G^{(i)} = G(x_0,\theta^{(i)})$ they use in solving for a Nash prediction and control action synthesis described in \cref{subsec:mpg}.
For example, consider that in player $j$'s conjectured finite horizon game $G(x_0,\theta^{(j)})$ as defined in \eqref{eq5}, the conjectured stage $t$ cost of player $i$ is $$g_{i,t}^{(j)}(x_t,u_t) =  \sum_{k \in |\theta^{(j)}|} \theta_k^{(j)}\left(x_t^\top Q_i^{k} x_t + x_t^\top q_i^{k} + u_t^\top R_i^{k} u_t\right),$$
or a conic combination of some linear-quadratic objectives.
Varying the parameter $\theta$ for a player will alter their conjectured cost functions of other players and thus alter the predicted collective behavior.
Our focus on misspecification between players' game models brings us to characterize the sensitivity of the equilibrium of the closed-loop dynamics \eqref{eq:closed_loop_system} to variations in each player's game parameter $\theta$.

Let $\overline{\theta} = (\theta^{(i)})_{i\in N}$ denote the collection of each player's cost parameter.
Given the current state $x_t$, the next control action is defined as
\begin{equation} \label{eq:sens_control_action}
u^\circ_{t} = \Xi \overline{u}(x_t,\overline{\theta}),
\end{equation}
where $\overline{u}(x_t,\overline{\theta}) = [u^\star(x_t,\theta^{(1)})^\top,\ldots,u^\star(x_t,\theta^{(n)})^\top]^\top$
 is the collection of all players' predictions using their respective models, and $\Xi$ selects the first action for the respective player.
From \Cref{prop:tobin}, each element of $\nabla_{(x_0,\overline{\theta})} \overline{u}{(x_0,\overline{\theta})}$ can be characterized in closed form under moderate assumptions.
Building off these classic findings, we provide a closed form expression for the equilibrium of \eqref{eq:closed_loop_system}.

\begin{proposition} \label{prop:sensitivity}
For parameter $\overline{\theta} = (\theta^{(i)})_{i\in N}$, let $x^\star(\overline{\theta})$ be a unique equilibrium of \eqref{eq:closed_loop_system} with conjectured games $\{G(\cdot,\theta^{(i)})\}_{i \in N}$. If \Cref{as:mpc,as:constraints} hold, and the matrix ($I - T\Xi \nabla_x \bar{u}(x^*,\bar{\theta})$) is invertible,  then the sensitivity of $x^\star$ to $\theta$ is:
\begin{equation} \label{eq:sensitivity_result}
\nabla_{\bar{\theta}} x^*(\bar{\theta})
=
\left(I - T \, \Xi \, \nabla_x \bar{u}(x,\bar{\theta})\right)^{-1} T \, \Xi \, \nabla_{\bar{\theta}}\bar{u}(x,\bar{\theta})
\end{equation}
where 
$T \coloneqq (I - A)^{-1} B$.
\end{proposition}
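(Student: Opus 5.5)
The plan is to write the equilibrium as the solution of a fixed-point equation in $(x,\overline{\theta})$ and apply the implicit function theorem. An equilibrium $x^\star$ of \eqref{eq:closed_loop_system} satisfies
\[
x^\star = A x^\star + B\,\Xi\,\overline{u}(x^\star,\overline{\theta}),
\]
using the realized control \eqref{eq:sens_control_action}. By \Cref{as:mpc}(i), $\rho(A)<1$, so $1$ is not an eigenvalue of $A$ and $I-A$ is invertible. Rearranging gives the fixed-point form
\[
x^\star = T\,\Xi\,\overline{u}(x^\star,\overline{\theta}), \qquad T=(I-A)^{-1}B.
\]
I will therefore define $\Phi(x,\overline{\theta}) := x - T\,\Xi\,\overline{u}(x,\overline{\theta})$ and study the zero set $\Phi(x^\star(\overline{\theta}),\overline{\theta})=0$.

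Next I need differentiability of $\overline{u}$ in both arguments at $(x^\star,\overline{\theta})$. Each block $u^\star(x,\theta^{(i)})$ is the solution map of $VI(F^{(i)}(\cdot;x,\theta^{(i)}),\mathcal{Z})$. By \Cref{as:mpc}(iv) each $F^{(i)}$ is strongly monotone, and it is affine in $u$ and in $x$ (LQ structure), so it is differentiable. \Cref{as:constraints} makes $\mathcal{Z}$ a polytope. \Cref{prop:tobin} with $\delta=(x_0,\theta^{(i)})$ then gives local uniqueness and differentiability of $u^\star$, with the Jacobians $\nabla_x u^\star$ and $\nabla_{\theta^{(i)}} u^\star$ given in closed form. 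Since block $i$ depends only on $\theta^{(i)}$, $\nabla_{\overline{\theta}}\overline{u}$ is block diagonal in the parameters. Consequently $\Phi$ is $C^1$ near $(x^\star,\overline{\theta})$.

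The main obstacle is that \Cref{prop:tobin} needs linear independence of the active constraints and strict complementarity, which are not literally listed among the proposition's hypotheses. I will adopt them implicitly as the standing ``moderate assumptions'' mentioned just before the statement, and remark that without them $\overline{u}$ is only piecewise smooth. In that case the formula holds on each active-set piece, or as a directional derivative.

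With differentiability in hand, $\nabla_x\Phi = I - T\,\Xi\,\nabla_x\overline{u}(x^\star,\overline{\theta})$, which is invertible by hypothesis. The implicit function theorem then yields a locally unique $C^1$ map $\overline{\theta}\mapsto x^\star(\overline{\theta})$, consistent with the assumed uniqueness of the equilibrium. Differentiating $x^\star = T\,\Xi\,\overline{u}(x^\star(\overline{\theta}),\overline{\theta})$ by the chain rule gives
\[
\nabla_{\overline{\theta}}x^\star = T\,\Xi\left(\nabla_x\overline{u}\,\nabla_{\overline{\theta}}x^\star + \nabla_{\overline{\theta}}\overline{u}\right).
\]
Solving for $\nabla_{\overline{\theta}}x^\star$ yields \eqref{eq:sensitivity_result}, with all Jacobians evaluated at $(x^\star,\overline{\theta})$.
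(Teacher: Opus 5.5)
Your proposal is correct and is essentially the paper's proof. Both rewrite the equilibrium condition as the fixed point $x^\star = T\,\Xi\,\overline{u}(x^\star,\overline{\theta})$, differentiate in $\overline{\theta}$ by the chain rule, and solve using the invertibility of $I - T\,\Xi\,\nabla_x\overline{u}(x^\star,\overline{\theta})$. Your extra steps tighten the argument without changing its route: you obtain invertibility of $I-A$ from $\rho(A)<1$, you use the implicit function theorem to justify differentiability of $x^\star(\overline{\theta})$ (which the paper tacitly assumes), and you flag that the linear-independence and strict-complementarity hypotheses of \Cref{prop:tobin} must be imposed for $\overline{u}$ to be differentiable.
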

The proof of \Cref{prop:sensitivity} appears in the appendix.

The sensitivity result is crucial in characterizing the impact of misspecification on the equilibrium point in a quantified manner. This result helps to quantify and understand the misalignment of the game models used by different agents, both within themselves and with the real game model, and the impact on the steady state of the feedback dynamics. As $\nabla_{\bar{\theta}}\bar{u}(\bar{\theta})$ increases, the sensitivity of the equilibrium point $x^*(\bar{\theta})$ to $\bar{\theta}$ increases. 

\section{NUMERICAL EXAMPLES}

\textbf{Numerical Example 1.} A stable 2-player system with misspecifications is numerically exemplified for MPG horizon K=5, A=[0.1 0.03; 0 0.05] and B=[0.5 0.3; 0.2 0.5]. 
\Cref{thm:stability} is satisfied and the system becomes stable. The results are given in Fig.~\ref{fig:stable_ex_1}. Additionally, simulations where stability was achieved in the presence of misspecifications without satisfying \Cref{thm:stability} were observed.

\textbf{Numerical Example 2.} An unstable 2-player system with misspecifications is numerically exemplified for MPG horizon K=5, A=[0.95  0.4; -0.3  0.9] and B=[0.1 0.2; -0.3 0.8].
\Cref{thm:stability} is not satisfied in this case, and the corresponding closed-loop simulation exhibits unstable behavior. The results are given in \cref{fig:num_ex_2}.

\textbf{Numerical Example 3.} The sensitivity of the equilibrium point for a 2-player stable system with misspecifications quantified by varying $\theta$ values is exemplified below using the same LTI system as Numerical Example 1. The cost matrices are quantified such that $g_1^{(1)}=g_1^A$ and $g_1^{(2)}=\theta g_1^A+(1-\theta)g_1^B$. Similarly, $g_2^{(1)}=g_2^A$ and $g_2^{(2)}=\theta g_2^A+(1-\theta)g_2^B$. Therefore, $x^\star(\theta=1)$ is the equilibrium point where both players use the same game model $G^A$ in their MPG controllers, resulting in no misspecification, and $x^\star(\theta=0)$ is when player 2 adopts $G^B$ while player 1 preserves $G^A$, resulting in the highest misspecification. $\theta$ is varied from 0 to 1 in 0.1 increments, and the corresponding $x^*(\theta)$ points are plotted in \Cref{fig:num_ex_3}. The gradient is also affected by the values chosen for the cost function matrices for both agents in $g^A$ and $g^B$.

\section{CONCLUSION}
In this paper, we studied a dynamical system with MPG controllers involving objective misspecifications resulting in uncertainty and heterogeneity in agents' conjectures.
We showed that stability can be preserved in the presence of objective misspecifications with \Cref{thm:stability} under \Cref{as:mpc}. 
We quantified the sensitivity of the equilibrium of the dynamical system to the varying amounts of heterogeneity in agents' conjectured games.
Future work would include stability and sensitivity analysis with additional misspecifications in the assumptions on the system model and constraint sets, to capture agents' uncertainty of the system dynamics and each other's capabilities. Additionally, future work will focus on online inverse learning to estimate the objective functions of other agents to improve their predictions.



\bibliography{references,references-2}
\bibliographystyle{IEEEtran}
\appendix

\noindent\textit{Proof of \Cref{thm:stability}.}
The proof is constructed of five parts, first identifying block elements of the system, then analyzing their networked structure.

\noindent\textit{1) Definition of feedback system blocks:} The feedback system blocks for the closed-loop system in \eqref{eq:closed_loop_system} are visualized in \Cref{fig:proof_feedback}. The system blocks are of two kinds: $\Psi_0$ as the LTI system in (\ref{eq2}) representing the multi-agent dynamics, and the collection of $\Psi_j$'s being the joint control action feedback composed of each agent's individual MPG controller defined by local finite horizon game solutions in the form of mappings $\mathcal{S}^{(j)}(\cdot)$ derived from the results of each agent's solution mappings (\ref{eq9}).
The properties of LTI systems are already well understood; we derive key properties of our static, non-linear feedback $\kappa$.

The pseudo-gradient mapping of each agent \( j \in N \) in \eqref{eq9} can be written as the sum of two separate terms dependent only on \(u\) and \( \mathbf{x} \), respectively, as:
$
    F^{(j)}(u, \mathbf{x})=F^{(j)}_{u}(u)+F^{(j)}_{x}\mathbf{x},
$
where
the expressions of \(F^{(j)}_{u}(\cdot)\) and \(F^{(j)}_{x}\) are
\begin{equation*}
\begin{aligned}
F^{(j)}_{u}(u) \hspace{-2pt} &= \hspace{-2pt}
{\footnotesize
2\begin{bmatrix}
\tilde{B}_1^\top \tilde{Q}_1^{(j)} \tilde{B}_1+\tilde{R}_{1_{11}}^{(j)} \hspace{-10pt}& \cdots & \hspace{-10pt} \tilde{B}_1^\top \tilde{Q}_1^{(j)} \tilde{B}_n+\tilde{R}_{1_{1n}}^{(j)} \\
\vdots & \ddots & \vdots \\
\tilde{B}^\top_n \tilde{Q}_n^{(j)} \tilde{B}_1+\tilde{R}_{n_{n1}}^{(j)} \hspace{-10pt}& \cdots & \hspace{-10pt} \tilde{B}_n^\top \tilde{Q}_n^{(j)} \tilde{B}_n+\tilde{R}_{n_{nn}}^{(j)}
\end{bmatrix}} u \\
&\quad +
{\footnotesize
\begin{bmatrix}
(\tilde{B}_1)^\top \tilde{q}_1^{(j)} \\
\vdots \\
(\tilde{B}_n)^\top \tilde{q}_n^{(j)}
\end{bmatrix}},
\qquad
F^{(j)}_{x} =
{\footnotesize
\begin{bmatrix}
2(\tilde{B}_1)^\top \tilde{Q}_1^{(j)} \tilde{A} \\
\vdots \\
2(\tilde{B}_n)^\top \tilde{Q}_n^{(j)} \tilde{A}
\end{bmatrix}},
\end{aligned}
\end{equation*}
where 
\(\tilde{Q_i}^{(j)} = \mathrm{blkdiag}(Q_i^{(j)}, \ldots, Q_i^{(j)})) \quad \tilde{R_{i_{ik}}}^{(j)} = \mathrm{blkdiag}(R_{i_{ik}}^{(j)}, \ldots, R_{i_{ik}}^{(j)})\) where $_{ik}$ denotes the block position within the original $R_i^{(j)}$, \(\tilde{q_i}^{(j)} = \mathrm{col}(q_i^{(j)}, \ldots, q_i^{(j)})\), 
\( \tilde{A} \) is the free response matrix of the global dynamics \eqref{eq2}, and \( \tilde{B}_i \) the impulse response matrix of agent \(i\), which are expressed explicitly as:
\begin{equation}
\tilde{A} =
{\footnotesize
\begin{bmatrix}
I \\
A \\
\vdots \\
A^K
\end{bmatrix}},
\quad
\tilde{B}_i =
{\footnotesize
\begin{bmatrix}
0 & \cdots & \cdots & 0 \\
B_i & 0 & \cdots & 0 \\
AB_i & B_i & \cdots & 0 \\
\vdots & \vdots & \ddots & \vdots \\
A^{K-1} B_i & \cdots & AB_i & B_i
\end{bmatrix}}
\label{eq21}
\end{equation}

Recall that each agent's finite horizon NE prediction is the solution to a variational inequality.
Variational inequalities can also be expressed as normal cone inclusion problems of the form $F(u) + N_\mathcal{Z}(u) \ni 0.$
The solution mapping $\mathcal{S}^{(j)}$ can be rewritten as $\mathcal{S}^{(j)}(\mathbf{x})\nonumber=\phi^{(j)}(-F^{(j)}_{x}\mathbf{x})$
by defining $\phi^{(j)}$, a static nonlinearity, as:
$
     \phi^{(j)}(\cdot) = \bigl( F^{(j)}_{u} + \mathcal{N}_{\mathcal{Z}} (u) \bigr)^{-1} (\cdot).
$

The first block, $\Psi_0$, representing the LTI system is:
\begin{equation}
\Psi_0 :
\left\{
\begin{aligned}
    x_{t+1} &= A x_t + z_t \\
    y_t &= x_{t}
\end{aligned}
\right.
\label{eq24}
\end{equation}
where $z_t=\sum_{j \in N} B_j \Xi_j u_t^{(j)}$. Therefore, the input and output of system \( \Psi_0 \), are \( z_t \) and \(y_t\), respectively.
This LTI system block is in feedback with the collection of the individual static nonlinear maps $\phi^{(j)}$, each represented as,
\begin{equation}
\Psi_{j} : u_t^{(j)} = \phi^{(j)}c_t^{(j)}
\label{eq25}
\end{equation}
where, the input and output of individual systems \( \Psi_j \), are \(c^{(j)}(t)\) and \(u^{(j)}(t)\), respectively, and $c^{(j)}(t) = -F^{(j)}_{x}\, y(t)$.

\noindent\textit{2) Existence of a dynamical system equilibrium:} We start by considering the set of equilibrium points of the closed-loop dynamical system in \eqref{eq:closed_loop_system} given by $\mathcal{E} = \{ \bar{x} \mid \bar{x} = A\bar{x} + B \kappa (\bar{x})\}$. The mapping 
$h(x) = (I - A)^{-1}B\kappa(x)$ is defined, and the following conditions are checked to conclude the existence of at least one fixed point. (i) $h(x)$ is single-valued: Through \Cref{as:mpc} and \Cref{prop:singleton}, the individual Nash equilibria for every agent's individual conjectured game $G^{(i)}$ inside their MPG controllers exist and are unique. Therefore, the MPG controllers admit unique solutions as outputs, and $h(x)$ takes in $\kappa$, which is a linear function of these unique solutions. (ii) $h(x)$ has the same set of fixed points as the closed-loop dynamics \eqref{eq:closed_loop_system}. (iii) $h(x)$ is continuous since individual mappings $\phi^{(i)}$'s are continuous by \cite[Proposition~2]{hall_stability_2025}, and $h(x)$ takes in $\kappa$, which is a linear function of these continuous mappings. (iv) $h$ maps $\mathbb{R}^{n_x}$ into the set $\mathcal{H} \coloneqq \big\{ (I-A)^{-1} B u \ \big|\ u \in \textstyle\prod_{i \in N} \mathcal{U}_i \big\}$, which is nonempty, convex, and compact: $\prod_{i \in N} \mathcal{U}_i$ is convex and compact under \Cref{as:mpc}, and $\mathcal{H}$ is its image under the linear map $(I-A)^{-1}B$. By (i) and (iii), $h$ is therefore a continuous, single-valued self-map of the nonempty compact convex set $\mathcal{H}$, and the Schauder--Tychonoff fixed-point theorem yields point $\bar{x} \in \mathcal{H}$ satisfying $\bar{x} = h(\bar{x})$. By (ii), $\bar{x} \in \mathcal{E}$, i.e., $\bar{x}$ is an equilibrium point of \eqref{eq:closed_loop_system}.

\begin{figure}
    \centering
    \includegraphics[width=\linewidth]{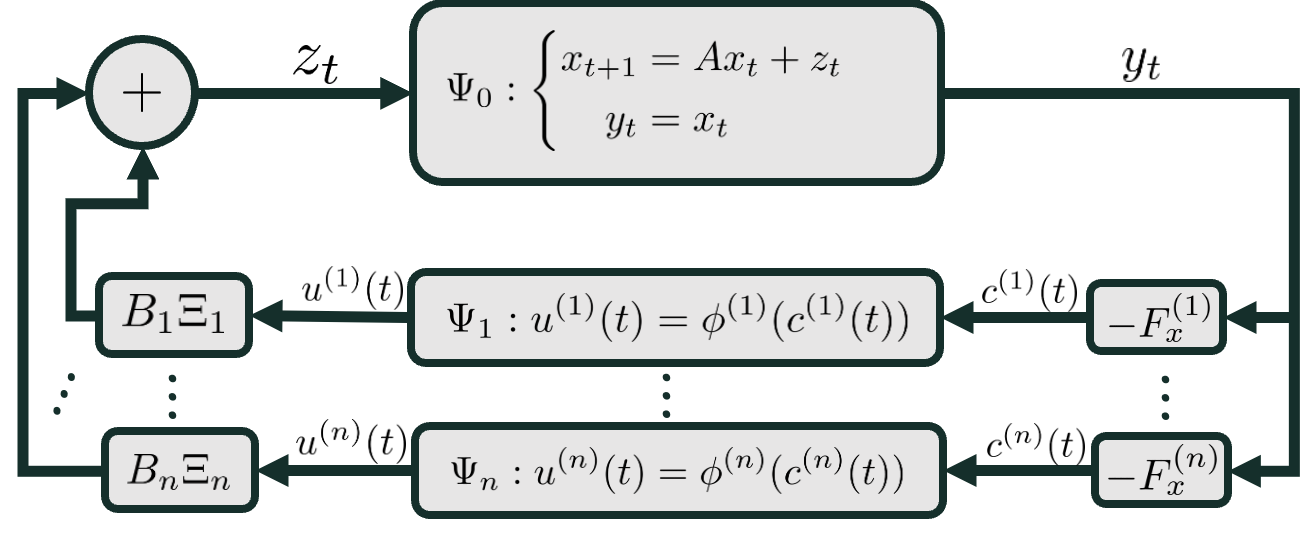}
    \caption{Block diagram of the feedback connection between $\Psi_0$ and $\Psi_j$'s, modeling heterogeneous MPG system.}
    \label{fig:proof_feedback}
\end{figure}

\noindent\textit{3) Inequalities for dissipativity and input-output relations:} Dissipation inequalities with storage functions are derived for each main subsystem block, namely, $V_0$ and $V_i$'s for each subsystem $\Psi_0$ and $\Psi_i$'s separately.
The following are defined: $\Delta z_t = (z_t - \bar{z})$, $\Delta y_{t} = (y_{t} - \bar{y})$, $\Delta u^{(j)}_{t} = (u^{(j)}_{t} - \bar{u}^{(j)}_{t})$, and $\Delta c^{(j)}_{t} = (c^{(j)}_{t} - \bar{c}^{(j)}_{t}) $. As $\Psi_0$ is a discrete-time LTI system, a quadratic storage function family is chosen as $V_{\bar{x}}^0(x_t) = (\Delta x_t)^\top P (\Delta x_t)$ for some positive definite $P$. Evaluating the evolution of $V_{\bar{x}}^0$ through the time steps results in:
\begin{equation}
V_{\bar{x}}^0(x_{t+1}) \hspace{-1pt}-\hspace{-1pt} V_{\bar{x}}^0(x_t) = 
\begin{bmatrix}
\Delta y_{t} \\
\Delta z_{t}
\end{bmatrix}^{\top}
\begin{bmatrix}
A^\top P A - P & A^\top P  \\
\star & P 
\end{bmatrix}
\begin{bmatrix}
\Delta y_{t} \\
\Delta z_{t}
\end{bmatrix}
\label{eq28}
\end{equation}

$\phi^{(j)}(\cdot)$'s are static nonlinear mappings: they are memoryless, lossless, and thus, do not store energy; their storage functions $V_{\bar{x}}^{j}$'s are equal to 0 \cite{haddad_stability_2019}. By \cite[Proposition~2]{hall_stability_2025}, $\phi^{(j)}(\cdot)$ are $\rho_j$-cocoercive, therefore, the following relationship between the input and the output is known, and achieved by rearranging the  $\rho_j$-cocoercivity inequality:
\begin{equation} \label{eq:mapping_cocoerc}
(\Delta u_t^{(j)})^\top \Delta c_t^{(j)}-\rho_j (\Delta u_t^{(j)})^\top \Delta u_t^{(j)}\;\ge\; 0
\end{equation}
Since $V_{\bar{x}}^{j}=0$, \eqref{eq:mapping_cocoerc} and $V_{\bar{x}}^{j}$ are compared to each other to achieve the following relationship between the storage function $V_{\bar{x}}^{j}$ and the input-output of $\Psi_i$:
\begin{equation}
V_{\bar{x}}^{j}\leq 
\begin{bmatrix}
\Delta u^{(j)}_{t} \\
\Delta c^{(j)}_{t}
\end{bmatrix}^{\top}
\begin{bmatrix}
-\rho_j I & \frac{1}{2}I \\
\star & 0
\end{bmatrix}
\begin{bmatrix}
\Delta u^{(j)}_{t} \\
\Delta c^{(j)}_{t}
\end{bmatrix},
\label{eq29}
\end{equation}

\noindent\textit{4) Connection of main blocks $\Psi_0$ and $\Psi_j$'s:} Next, the connection between the two main dissipativity blocks are applied through the interconnection equations given as:
\begin{equation}
\Delta z_t = \sum_{j \in N}B_j\Xi_j\Delta u^{(j)}_{t} ,   \\ \Delta c^{(j)}_{t} =-F^{(j)}_x\Delta y_t.
\label{eq36}
\end{equation}

By substituting the interconnection equations into the storage function inequalities and adding them altogether, we can write the overall inequality for the interconnection of $\Psi_0$ and $\Psi_j$'s in terms of $\Delta y_{t}$ and  $\Delta u^{(j)}_{t}$'s collected under 
\(\Delta \hat{u}_t= \mathrm{col}(\Delta u^{(j)}_{t})_{j \in N} \).
This is represented as $V_{\bar{x}}(x_{t+1}) -V_{\bar{x}}(x_t) \hspace{-1pt} =V_{\bar{x}}^{0}(x_{t+1}) - V_{\bar{x}}^{0}(x_t) + \sum_{j\in N}V_{\bar{x}}^{j}$,
which corresponds to:
\begin{equation} \label{eq:thm_derivation}
{\footnotesize\begin{aligned}
&V_{\bar{x}}(x_{t+1}) \hspace{-2pt} - \hspace{-2pt} V_{\bar{x}}(x_t)\le
\begin{bmatrix}
\Delta y_{t} \\
\Delta \hat{u}_t
\end{bmatrix}^{\top} \hspace{-5pt}
(\begin{bmatrix}
A^\top P A - P & A^\top P \hat{B} \\
\hat{B}^\top P A & \hat{B}^\top P \hat{B}
\end{bmatrix}
+ \lambda W)
\begin{bmatrix}
\Delta y_{t} \\
\Delta \hat{u}_t
\end{bmatrix}
\end{aligned}}
\end{equation}
where $\lambda>0$, $\hat{B} =\begin{bmatrix} {B}_1\Xi_1 & {B}_2\Xi_2 & \dots & {B}_n\Xi_n \end{bmatrix}$, and the blocks of $W$ are defined as in \eqref{eq:W_def}.
Thus, if
\begin{equation}
\begin{aligned}
&\begin{bmatrix}
A^\top P A - P & A^\top P \hat{B} \\
\hat{B}^\top P A & \hat{B}^\top P \hat{B}
\end{bmatrix}
+ \lambda W
\preceq -\varepsilon I
\end{aligned}
\label{eq40}
\end{equation}
for some $\varepsilon > 0$, then $V_{\bar{x}}(x_{t+1}) - V_{\bar{x}}(x_t) < 0$.

\noindent\textit{5) Stability of the overall system:}  The discrete-time Lyapunov theorem \cite{haddad_stability_2019} is applied to show the closed-loop stability of the overall system. Conditions (i) $V_{\bar{x}}(\bar{x}) = 0$, (ii) $V_{\bar{x}}(x) > 0$, $\forall x \in \mathbb{R}^{n_x} \setminus \{\bar{x}\}$, (iii) $V_{\bar{x}}(x) \to \infty$ as $\|x\| \to \infty$: are satisfied as a quadratic storage function is selected. (iv) $V_{\bar{x}}(f(x)) - V_{\bar{x}}(x) < 0$, $\forall x \in \mathbb{R}^{n_x}$: is satisfied as the decrease condition is imposed by \eqref{eq40}. The equilibrium point $\bar{x}$ is globally asymptotically stable for $\varepsilon > 0$ and stable if $\varepsilon = 0$. 
\hfill $\blacksquare$

\noindent\textit{Proof of \Cref{prop:sensitivity}.}
At the equilibrium, $x^* = A x^* + B u^\circ$.
By substituting \ref{eq:sens_control_action} in, 
the equilibrium point equation becomes
$x^* = (I - A)^{-1} B \Xi \bar{u}(\overline{\theta}, x^\star)$.
Taking the derivative of both sides with respect to $\theta$ results in,
\begin{equation} \label{eq:sens_eq_der}
\nabla_{\bar{\theta}} x^*(\theta)
=
(I - A)^{-1} B \, \Xi \, \nabla_\theta \bar{u}(x^*(\bar{\theta}),\bar{\theta}).
\end{equation}
Since $\nabla_{\bar{\theta}} \bar{u}=\nabla_x \bar{u}(x,\bar{\theta})\,\nabla_{\bar{\theta}} x^*(\theta)+\nabla_{\bar{\theta}} \bar{u}(x,\bar{\theta})$, \eqref{eq:sens_eq_der}, with substituting $ T = (I - A)^{-1} B$, can be rewritten as \eqref{eq:sensitivity_result}.
\hfill $\blacksquare$

\end{document}